\numberwithin{equation}{section}
\theoremstyle{definition}
\newtheorem{dfn}{Definition}[section]
\theoremstyle{plain}
\newtheorem{thm}{Theorem}[section]
\newtheorem{lmm}{Lemma}[section]
\theoremstyle{definition}
\newcommand{\e}{\mathrm{e}}
\begin{document}
\title[Fractional short rate ]
{Actuarial strategy for pricing Asian options under a mixed fractional Brownian motion with jumps}

\date{\today}

\author[Shokrollahi]{Foad Shokrollahi}
\address{Department of Mathematics and Statistics, University of Vaasa, P.O. Box 700, FIN-65101 Vaasa, FINLAND}
\email{foad.shokrollahi@uva.fi}

\author[Ahmadian]{Davood Ahmadian}
\address{ Faculty of Mathematical Sciences, University of Tabriz
29Blvd, Tabriz, Iran}
\email{d.ahmadian@tabrizu.ac.ir }

\author[Ballestra]{Luca Vincenzo Ballestra}
\address{Department of Statistical Sciences, Alma Mater Studiorum University of Bologna, Via delle Belle Arti 41, 40126 Bologna, Italy
}
\email{luca.ballestra@unibo.it}

\begin{abstract}
The mixed fractional Brownian motion ($mfBm$) has become quite popular in finance, since it allows one to model long-range dependence and self-similarity  while remaining, for certain values of the Hurst parameter, arbitrage-free. In the present paper, we propose approximate closed-form solutions for pricing arithmetic Asian options on an underlying described by the $mfBm$. Specifically, we consider both arithmetic Asian options and arithmetic Asian power options, and we obtain analytical formulas for pricing them based on a convenient approximation of the strike price. Both the standard $mfBm$ and the $mfBm$ with Poisson log-normally distributed jumps are taken into account.
\end{abstract}
\keywords{Mixed fractional Brownian motion; Asian options; Power options; Jump-diffusion; Option pricing}

\subjclass[2010]{91G20; 91G80; 60G22}

\maketitle

\section{Introduction}
As documented by several empirical studies, returns of financial assets are often affected by long-range dependence and
self-similarity, and thus several scholars have proposed to model them by using the so-called fractional Brownian motion ($fBm$) with
Hurst exponent $H \in (1/2, 1)$, see, e.g., \cite{Ding}, \cite{Lo}, \cite{Rostek}. Nevertheless, as shown in \cite{BenderMOR}, and \cite{Cheridito2}, the $fBm$ is not arbutrage-free, and thus, some authors \cite{Cheridito}, \cite{MishuraBook}, \cite{Mishura}, \cite{Zili} have proposed to use the so-called mixed fractional Brownian motion ($mfBm$), i.e., a linear combination of the Wiener process and the $fBm$ itself.
When the Hurst exponent of the $fBm$ is greater than $1/2$, the $mfBm$ results in a Gaussian long-memory process suitable for modeling
the returns of financial assets. Furthermore, as shown in \cite{Cheridito}, if $H$ is greater than $3/4$ the $mfBm$ is arbitrage-free. Therefore, the $mfBm$ has been largely employed for pricing various kinds of derivatives, including stock options \cite{Ballestra}, \cite{Chen},
\cite{Wang2}, equity warrants \cite{Xiao}, currency options \cite{Sun,shokrollahi2015actuarial,shokrollahi2014pricing}, and credit derivatives \cite{He}.
\par
The $mfBm$ has also been used for pricing Asian options, i.e. options whose payoff depends on the average, either geometric or arithmetic, of the underlying asset over the whole option's lifespan \cite{shokrollahi2018evaluation}. In particular, \cite{Prakasa} obtained an exact closed-form solution for geometric Asian options, \cite{ZhangPower} derived an exact analytical formula for geometric Asian power options, \cite{Wang2} proposed an exact closed-form solution for geometric Asian rainbow options on two underlying assets, which was generalized to the case on an arbitrarily large number of underlying assets in
\cite{AhmadianPhysicaA}. Furthermore, \cite{Peng} considered a $mfBm$ with jumps and obtained an exact analytical formula for valuing geometric Asian power options.\par
All the exact closed-form solutions proposed in the aforementioned papers are valid for Asian options of geometric type. By contrast, for arithemtic Asian options, exact closed-form solutions cannot be found, and thus some numerical approximation is required. To this aim, some researchers have proposed to apply the Monte Carlo simulation method, which could also be enhanced by using the price of a geometric Asian option as a control variate (see \cite{Peng}). Nevertheless, this approach, albeit flexible and relatively simple to implement, has the disadvantage of being very slow to converge.
\par Therefore, in this paper we propose the use of approximate closed-form solutions. Specifically, we consider both arithmetic Asian options and arithmetic Asian power options, and we value them by applying an analytical formula that we obtain by a convenient approximation of the strike price. Both the standard $mfBm$ and the $mfBm$ with Poisson log-normally distributed jumps are taken into account.
\par
The remainder of the paper is organized as follows: in Section 2 we describe the $mfBm$ with jumps and present exact analytical formulas for pricing Asian options of geometric type. In Section 3 we derive an approximate closed-form solution for pricing arithmetic Asian options, which in Section 4 we extend to arithmetic Asian power options. Finally, Section 5 concludes.

\section{Preliminaries}
We assume that the following assumptions hold:
\begin{enumerate}
\item[(i)] the dynamics of the stock price is governed by the following equation
\begin{eqnarray}
S_t=S_0\exp\{f(t)+\sigma B_t+\varepsilon B_t^H-\frac{1}{2}\sigma^2t-\frac{1}{2}\varepsilon^2t^{2H}+\sum_{i=1}^{N_t}J_i\},
\label{stock price 1}
\end{eqnarray}

where $f(t)$ is a non-random function of $t$, and, $\sigma$, $\varepsilon$ are
constants; $B_t$ and $B_t^H$ are a standard Brownian motion and a $fBm$ respectively; $N_t$ is a poisson process with rate
$\lambda$; $e^{J_i}-1$ is jump size which is a sequence of independent identically distributed and $J_i\sim N(\mu_J, \sigma_J^2)$. Moreover, all three
sources of randomness, the $B_t$ and $B_t^H$ and  $N_t$ are
supposed to be independent;

\item[(ii)]  the risk free interest rate $r$ and dividend rate $q$ are known and constant
through time;

\item[(iii)] the option can be exercised only at the maturity time;
\item[(iv)] there are no transaction costs in buying or selling the stocks or option.
\end{enumerate}

Based on option valuation strategy and risk neutral measure (cf.Refs.\cite{cheridito2003arbitrage,shreve2004stochastic}), the dynamics of the stock price process $S(t)$ can be written as:

\begin{eqnarray}
S_t=S_0\exp[(r-q)t+\sigma B_t+\varepsilon B_t^H-\frac{1}{2}\sigma^2t-\frac{1}{2}\varepsilon^2t^{2H}+\sum_{i=1}^{N_t}J_i].
\label{stock price 2}
\end{eqnarray}

\begin{lmm}

\begin{eqnarray}
E[S_T]=S_0e^{(r-q)T+\lambda(\rho-1)T}.
\end{eqnarray}
\label{lemma1}
where $\rho=E[e^{J_1}]$.
\end{lmm}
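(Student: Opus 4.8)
The plan is to compute $E[S_T]$ directly from the explicit representation \eqref{stock price 2}, exploiting the mutual independence of the three sources of randomness $B_T$, $B_T^H$ and $N_T$ (together with the jumps $J_i$) postulated in assumption (i). First I would factor out of the expectation the purely deterministic term $S_0\exp\{(r-q)T-\tfrac12\sigma^2 T-\tfrac12\varepsilon^2 T^{2H}\}$, so that the problem reduces to evaluating the product $E[e^{\sigma B_T}]\,E[e^{\varepsilon B_T^H}]\,E[\exp(\sum_{i=1}^{N_T}J_i)]$.

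Next I would handle the two Gaussian factors via the moment generating function of a centred normal law: since $\sigma$ and $\varepsilon$ are constants, $B_T\sim N(0,T)$ gives $E[e^{\sigma B_T}]=e^{\sigma^2 T/2}$, and $B_T^H\sim N(0,T^{2H})$ gives $E[e^{\varepsilon B_T^H}]=e^{\varepsilon^2 T^{2H}/2}$. These two factors cancel precisely against the corrections $-\tfrac12\sigma^2 T$ and $-\tfrac12\varepsilon^2 T^{2H}$ carried by the deterministic prefactor, which is of course the reason those terms appear in \eqref{stock price 2} in the first place.

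The only step needing a genuine (if short) argument is the jump contribution. Conditioning on $\{N_T=n\}$ and using that the $J_i$ are i.i.d.\ and independent of $N_T$, one obtains $E[\exp(\sum_{i=1}^{N_T}J_i)\mid N_T=n]=\rho^{\,n}$ with $\rho=E[e^{J_1}]$; summing against the Poisson probabilities $e^{-\lambda T}(\lambda T)^n/n!$ and recognising the exponential series yields $E[\rho^{N_T}]=e^{-\lambda T}e^{\lambda T\rho}=e^{\lambda(\rho-1)T}$. Assembling the four factors gives $E[S_T]=S_0 e^{(r-q)T+\lambda(\rho-1)T}$, as claimed.

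I do not anticipate any real obstacle: the computation is routine once independence is invoked. The only points deserving care are that the normal moment generating function may be used because $\sigma,\varepsilon$ are non-random, and that the interchange of expectation with the Poisson series is legitimate by absolute convergence (the summands being nonnegative, monotone convergence also suffices). If an explicit form is desired one may further note that $J_1\sim N(\mu_J,\sigma_J^2)$ gives $\rho=e^{\mu_J+\sigma_J^2/2}$, though the statement is cleaner left in terms of $\rho$.
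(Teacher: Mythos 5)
Your proposal is correct and follows essentially the same route as the paper's proof: factor the expectation by independence, cancel the Gaussian corrections $-\tfrac12\sigma^2T$ and $-\tfrac12\varepsilon^2T^{2H}$ via the normal moment generating function, and evaluate the jump factor by conditioning on $N_T$ and summing the Poisson series to get $e^{\lambda(\rho-1)T}$. The paper simply states the Gaussian factor equals $e^{(r-q)T}$ without writing out the moment-generating-function step you make explicit.
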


\begin{proof}
\begin{eqnarray}
E[S_T]&=&S_0E[e^{(r-q)t+\sigma B_t+\varepsilon B_t^H-\frac{1}{2}\sigma^2t-\frac{1}{2}\varepsilon^2t^{2H}+\sum_{i=1}^{N_T}J_i}\nonumber\\
&=&S_0E[e^{(r-q)t+\sigma B_t+\varepsilon B_t^H-\frac{1}{2}\sigma^2t-\frac{1}{2}\varepsilon^2t^{2H}}]E[\prod_{i=1}^{N_T}e^{J_i}]\nonumber\\
&=&S_0e^{(r-q)T}E[\prod_{i=0}^{N_T}e^{J_i}],
\end{eqnarray}
and
\begin{eqnarray}
E\left[\prod_{i=0}^{N_T}e^{J_i}\right]&=&\sum_{n=0}^{\infty}\frac{(\lambda T)^ne^{-\lambda T}}{n!}E[\prod_{i=1}^{n}e^{J_i}]\nonumber\\
&=&\sum_{n=0}^{\infty}\frac{(\lambda T)^ne^{-\lambda T}}{n!}\rho^n\nonumber\\
&=&e^{\lambda(\rho-1)T}\sum_{n=0}^{\infty}\frac{(\lambda T)^ne^{-\rho\lambda T}}{n!}\nonumber\\
&=&e^{\lambda(\rho-1)T}.
\end{eqnarray}

Then,

\begin{eqnarray}
E[S_T]=S_0e^{(r-q)T+\lambda(\rho-1)T}.
\end{eqnarray}

\end{proof}

This section deals with  the new pricing model for the currency options using the actuarial approach, when the spot exchange rate follows the $MFBM$ with jumps process. This model can be applied to different financial markets such as: in the arbitrage-free, equilibrium and complete markets and also in the arbitrage,non-equilibrium and incomplete markets.

\begin{dfn}(\cite{bladt1998actuarial}) The expectation return rate $\theta(t)$  of $S_t$ on $t\in[0,T]$ is defined to $\int_0^T\theta(s)ds$  as follows:
\begin{eqnarray}
\frac{E\Big(S(T)\Big)}{S_0}=\exp\Big(\int_0^T\theta(s)ds\Big),
\end{eqnarray}
where $\theta(t)$ is the continuously compounding ram of return of $S_t$ at time $t$.
\label{actuarial}
\end{dfn}

\begin{dfn}  Let $A(T)$ be the average price of the underlying asset over the predetermined interval. Then, the value of geometric Asian options with exercise price $K$ and maturity time $T$ by actuarial approach are as follows:
\begin{eqnarray}
C(S_T,K)=E\Big[\Big(\exp\Big(-\int_0^T\theta(t)dt\Big)A(T)-Ke^{-rT}\Big)I_B\Big],
\label{eq:callprice}
\end{eqnarray}
for Asian call option, where $B=\exp\Big(-\int_0^T\theta(t)dt\Big)A(T)>Ke^{-rT}$.
\begin{eqnarray}
P((S_T,K)=E\Big[\Big(Ke^{-rT}-\exp\Big(-\int_0^T\theta(t)dt\Big)A(T)\Big)I_{B'}\Big],
\label{eq:putprice}
\end{eqnarray}
for Asian put option, $B'=\exp\Big(-\int_0^T\theta(t)dt\Big)A(T)<Ke^{-rT}$.
\label{Actuarial price}
\end{dfn}

\begin{thm}

Let the stock price $S_T$ satisfy Equation (\ref{stock price 2}). Then, the value of geometric Asian options with exercise price $K$ is given by
\begin{eqnarray}
C(S_T,K)&=&\nonumber\\
&=&\sum_{n=0}^{\infty}\frac{(\lambda T)^ne^{-\lambda T}}{n!}\left[e^{-(r-q)T-\lambda(\rho-1)T+\hat{\mu}+\frac{1}{2}\hat{\sigma}^2}\Phi(d_1)-Ke^{-rT}\Phi(d_2)\right],
\label{geometricjump}
\end{eqnarray}
for the call options,
and
\begin{eqnarray}
P(S_T,K)&=&\nonumber\\
&=&\sum_{n=0}^{\infty}\frac{(\lambda T)^ne^{-\lambda T}}{n!}\left[Ke^{-rT}\Phi(-d_2)-\e^{-(r-q)T-\lambda(\rho-1)T+\hat{\mu}+\frac{1}{2}\hat{\sigma}^2}\Phi(-d_1)\right],
\end{eqnarray}
for the put options, where
\begin{eqnarray}
&&d_2=\frac{\hat{\mu}-\ln U}{\hat{\sigma}}, \, d_1=d_2+\hat{\sigma}, \, U=Ke^{-qT-(1-\rho)\lambda T}, \nonumber\\
&&\rho=E(e^J)=e^{\mu_J+\frac{1}{2}\sigma_J^2},\,\hat{\mu}=\ln S_0+\frac{1}{2}(r_n-\frac{1}{2}\sigma_n^2)T, \, \hat{\sigma}^2=\sigma_n^2\frac{T}{3},\nonumber\\
&&r_n=r-q+\frac{n}{T}(\mu_J+\frac{1}{2}\sigma_J^2), \, \sigma_n^2=\sigma^2+\frac{1}{T}(\varepsilon^2 T^{2H}+n\sigma_J^2).
\end{eqnarray}
and $\Phi(.)$ is cumulative normal density function.
\label{TheoremGeometric}
\end{thm}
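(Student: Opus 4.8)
The plan is to insert the actuarial pricing formula of Definition~\ref{Actuarial price} into a conditioning argument on the number of jumps, reducing everything to a Poisson mixture of lognormal-type expectations. First I would make the deterministic factor $\exp\bigl(-\int_0^T\theta(t)\,\d t\bigr)$ explicit: by Definition~\ref{actuarial} and Lemma~\ref{lemma1}, $\int_0^T\theta(t)\,\d t=\ln\!\bigl(E[S_T]/S_0\bigr)=(r-q)T+\lambda(\rho-1)T$, so this factor is the constant $\beta:=\e^{-(r-q)T-\lambda(\rho-1)T}$. Substituting into \eqref{eq:callprice},
\[
C(S_T,K)=\beta\,E\bigl[A(T)\,\1_{\{\beta A(T)>K\e^{-rT}\}}\bigr]-K\e^{-rT}\,\P\bigl(\beta A(T)>K\e^{-rT}\bigr),
\]
and the exercise event is $\{A(T)>U\}$ with $U=K\e^{-rT}/\beta=K\e^{-qT-(1-\rho)\lambda T}$, the $U$ of the statement.

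The heart of the proof is the conditional law of $\ln A(T)$ given $\{N_T=n\}$. Conditioning on $\{N_T=n\}$ (Poisson mass $(\lambda T)^n\e^{-\lambda T}/n!$, independent of the Gaussian part), the aggregate jump is normal with mean $n\mu_J$ and variance $n\sigma_J^2$, so $\ln S_T$ is conditionally Gaussian with mean $\ln S_0+(r-q)T-\tfrac12\sigma^2T-\tfrac12\varepsilon^2T^{2H}+n\mu_J$ and variance $\sigma^2T+\varepsilon^2T^{2H}+n\sigma_J^2$. I would then represent the conditional log-price path by an arithmetic Brownian motion $\ln S_0+(r_n-\tfrac12\sigma_n^2)t+\sigma_n\widetilde B_t$ with the same mean and variance at $t=T$, which forces $\sigma_n^2=\sigma^2+\tfrac1T(\varepsilon^2T^{2H}+n\sigma_J^2)$ and $r_n=r-q+\tfrac nT(\mu_J+\tfrac12\sigma_J^2)$. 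For the geometric average $\ln A(T)=\tfrac1T\int_0^T\ln S_t\,\d t$ this gives a Gaussian variable with mean $\hat\mu=\ln S_0+\tfrac12(r_n-\tfrac12\sigma_n^2)T$ and variance $\hat\sigma^2=\sigma_n^2\operatorname{Var}\!\bigl(\tfrac1T\int_0^T\widetilde B_t\,\d t\bigr)=\sigma_n^2T/3$, since $\operatorname{Var}\!\bigl(\tfrac1T\int_0^T\widetilde B_t\,\d t\bigr)=\tfrac1{T^2}\int_0^T\!\!\int_0^T\min(s,t)\,\d s\,\d t=T/3$.

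With $A(T)=\e^{X}$ and $X\sim N(\hat\mu,\hat\sigma^2)$ on $\{N_T=n\}$, I would apply the elementary identities $\P(X>\ln U)=\Phi\!\bigl((\hat\mu-\ln U)/\hat\sigma\bigr)=\Phi(d_2)$ and $E\bigl[\e^{X}\1_{\{X>\ln U\}}\bigr]=\e^{\hat\mu+\frac12\hat\sigma^2}\Phi\!\bigl((\hat\mu-\ln U)/\hat\sigma+\hat\sigma\bigr)=\e^{\hat\mu+\frac12\hat\sigma^2}\Phi(d_1)$ (the latter by completing the square). The conditional call price becomes $\beta\e^{\hat\mu+\frac12\hat\sigma^2}\Phi(d_1)-K\e^{-rT}\Phi(d_2)=\e^{-(r-q)T-\lambda(\rho-1)T+\hat\mu+\frac12\hat\sigma^2}\Phi(d_1)-K\e^{-rT}\Phi(d_2)$, and summing against the Poisson weights yields \eqref{geometricjump}; the put follows from \eqref{eq:putprice} by the same computation on the complementary events (so $\Phi(d_i)$ becomes $\Phi(-d_i)$), or from put--call parity.

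The step I expect to be delicate is precisely the conditional law of the geometric average: the exact variance of $\tfrac1T\int_0^T\ln S_t\,\d t$ is not $\sigma_n^2T/3$, because $\operatorname{Var}\!\bigl(\tfrac1T\int_0^TB_t^H\,\d t\bigr)=T^{2H}/(2H+2)$ rather than $T^{2H}/3$, and the time-average of the compound-Poisson term adds a further variance that depends on the (uniformly distributed) jump times and on $\mu_J$. Obtaining the clean formula $\hat\sigma^2=\sigma_n^2T/3$ (and the stated $\hat\mu$) therefore hinges on the moment-matched Brownian representation of the conditional path; I would state that approximation explicitly and, ideally, estimate the error it introduces, rather than present the result as exact. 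Everything else is routine bookkeeping with Gaussian integrals and the Poisson sum.
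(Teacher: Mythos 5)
Your proposal follows essentially the same route as the paper: condition on $N_T=n$, moment-match the conditional log-price to a Brownian path with drift $r_n-\tfrac12\sigma_n^2$ and volatility $\sigma_n$, compute $\hat\mu$ and $\hat\sigma^2=\sigma_n^2 T/3$ for $L(T)=\tfrac1T\int_0^T\ln S_t\,\mathrm{d}t$, and finish with the standard truncated-lognormal identities summed against the Poisson weights. Your closing caveat is well taken and is in fact a point the paper glosses over: the paper presents $\hat\sigma^2=\sigma_n^2 T/3$ as exact, whereas the fractional component of $\operatorname{Var}\bigl(\tfrac1T\int_0^T\ln S_t\,\mathrm{d}t\bigr)$ is $\varepsilon^2T^{2H}/(2H+2)$ rather than $\varepsilon^2T^{2H}/3$ and the time-averaged compound-Poisson term has a different law altogether, so the clean formula rests on the unstated path-level moment-matching approximation you correctly identify.
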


\begin{proof}
We know that $B_T\sim N(0, T)$ and  $B_T^H\sim N(0, T^{2H})$. If $Z_1\sim N(0, 1)$ and $Z_2\sim N(0, 1)$, then $B_T=\sqrt{T}Z_1$ and $B_T^H=\sqrt{T^{2H}}Z_2$. Given that $J_i\sim N(\mu_J, \sigma_J^2)$ and $e^{J_i}-1$ is a sequence of independent identically distributed, we have $\sum_{i=1}^{n}J_i \sim N\left(n\mu_J, n\sigma_J^2\right)$; and for
$Z_3\sim N(0, 1)$ we have  $\sum_{i=1}^{n}J_i=n\mu_J+\sqrt{n}\sigma_JZ_3$. Then for $Z_n\sim (0, 1)$, it follows

\begin{eqnarray}
\sigma_n\sqrt{T}Z_n=\sigma\sqrt{T}Z_1+\varepsilon\sqrt{T^{2H}}Z_2+\sqrt{n}\sigma_JZ_3,
\label{eq:n}
\end{eqnarray}

where $\sigma_n^2=\sigma^2+\frac{1}{T}\left(\varepsilon^2T^{2H}+n\sigma_J^2\right)$. Consequently, assuming $N_T=n$, from Eqs. (\ref{stock price 2}) and (\ref{eq:n}) we obtain

\begin{eqnarray}
S_T^n&=&S_0\exp\left[(r-q)T-\frac{1}{2}\sigma^2T-\frac{1}{2}\varepsilon^2T^{2H}+n\mu_J+\sigma_n\sqrt{T}Z_n\right]\nonumber\\
&=&S_0\exp\left[(r_n-\frac{1}{2}\sigma_n^2)T+\sigma_n\sqrt{T}Z_n\right],
\label{eq:Sn}
\end{eqnarray}
here
\begin{eqnarray}
r_n=r-q+\frac{n}{T}(\mu_J+\frac{1}{2}\sigma_J^2).
\end{eqnarray}

Let,

\begin{eqnarray}
L(T)=\frac{1}{T}\int_0^T\ln S_tdt,
\end{eqnarray}
and
\begin{eqnarray}
G(T)=\exp\left\{L(T)\right\}.
\end{eqnarray}

We know that the random variable $L(T)$ has Gaussian distribution under the risk-neutral probability measure. Let $\hat{\mu}$ and $\hat{\sigma}^2$ denote the mean and the variance of the random variable $L(T)$, respectively. If $N_T=n$, then

\begin{eqnarray}
\hat{\mu}&=&E[L(T)]=\frac{1}{T}\int_0^TE\left[\ln S_t^n\right]dt=\ln S_0+\frac{1}{T}\int_0^T(r_n-\frac{1}{2}\sigma_n^2)tdt\nonumber\\
&=&\ln S_0+\frac{1}{2}(r_n-\frac{1}{2}\sigma_n^2)T,
\end{eqnarray}

and

\begin{eqnarray}
\hat{\sigma}^2&=&Var[L(T)]=E[L(T)-\hat{\mu}]^2\nonumber\\
&=&\frac{1}{T^2}\sigma_n^2\int_0^T\int_0^TE\left(\sqrt{t}Z_n(t)\sqrt{s}Z_n(s)\right)dtds\nonumber\\
&=&\frac{1}{T^2}\sigma_n^2\int_0^T\int_0^T[|t|+|s|-|t-s|]dtds\nonumber\\
&=&\sigma_n^2\frac{T}{3}
\end{eqnarray}

The random variable $G(T)$ is log-normally distributed and the random variable $\ln G(T)$ has the Gaussian distribution with the mean $\hat{\mu}$ and the variance $\hat{\sigma}^2$ as obtained above. Then, from Eq. (\ref{eq:callprice}), the price of geometric Asian call option is:

\begin{eqnarray}
C(S_T,K)&=&E\left[\left(\exp\left(-\int_0^T\theta(t)dt\right)G(T)-Ke^{-rT}\right)I_B\right]\nonumber\\
&=&E\left[\left(e^{-(r-q)T-\lambda(\rho-1)T}G(T)-Ke^{-rT}\right)I_B\right]\nonumber\\
&=&e^{-(r-q)T-\lambda(\rho-1)T}E\left[G(T)I_B\right]-Ke^{-rT}E\left[I_B\right]\nonumber\\
&=&e^{-(r-q)T-\lambda(\rho-1)T}I_1-Ke^{-rT}I_2,
\label{I}
\end{eqnarray}
where $I_1=E\left[G(T)I_B\right]$ and $I_2=E\left[I_B\right]$.
Let

\begin{eqnarray}
B&=&\left\{x:e^x>U=Ke^{-qT-(1-\rho)\lambda T}\right\}\nonumber\\
&=&\left\{x:G(T)>U\right\}\nonumber\\
&=&\left\{y:e^{\hat{\mu}+\hat{\sigma}y}>U\right\}=\left\{y:\hat{\mu}+\hat{\sigma}y>\ln U\right\}\nonumber\\
&=&\left\{y:y>-d_2=\frac{\ln U-\hat{\mu}}{\hat{\sigma}}\right\}.
\end{eqnarray}
Observe that

\begin{eqnarray}
I_2&=&E\left[I_B\right]=\sum_{n=0}^{\infty}E\left[I_{G(T)>-d_2}\right]P\left[N_T=n\right]\nonumber\\
&=&\sum_{n=0}^{\infty}\frac{(\lambda T)^ne^{-\lambda T}}{n!}E\left[I_{G(T)>-d_2}\right]\nonumber\\
&=&\sum_{n=0}^{\infty}\frac{(\lambda T)^ne^{-\lambda T}}{n!}\int_{-d_2}^{\infty}\phi(y)dy\nonumber\\
&=&\sum_{n=0}^{\infty}\frac{(\lambda T)^ne^{-\lambda T}}{n!}\Phi(d_2),
\label{I2}
\end{eqnarray}

and

\begin{eqnarray}
I_1&=&E\left[G(T)I_B\right]=\sum_{n=0}^{\infty}E\left[G(T)I_{G(T)>-d_2}\right]P\left[N_T=n\right]\nonumber\\
&=&\sum_{n=0}^{\infty}\frac{(\lambda T)^ne^{-\lambda T}}{n!}E\left[e^{L(T)}I_{G(T)>-d_2}\right]\nonumber\\
&=&\sum_{n=0}^{\infty}\frac{(\lambda T)^ne^{-\lambda T}}{n!}\int_{B}e^x\frac{1}{\sqrt{2\pi}\hat{\sigma}}e^{-\frac{(x-\hat{\mu})^2}{2\hat{\sigma}^2}}dx\nonumber\\
&=&\sum_{n=0}^{\infty}\frac{(\lambda T)^ne^{-\lambda T}}{n!}\int_{B}e^{\hat{\mu}+\hat{\sigma}y}\frac{1}{\sqrt{2\pi}\hat{\sigma}}e^{-\frac{(x-\hat{\mu})^2}{2\hat{\sigma}^2}}dy\nonumber\\
&=&\sum_{n=0}^{\infty}\frac{(\lambda T)^ne^{-\lambda T}}{n!}e^{\hat{\mu}+\frac{1}{2}\hat{\sigma}^2}\int_{-d_2}^{\infty}\frac{1}{\sqrt{2\pi}}e^{-\frac{1}{2}(y-\hat{\sigma})^2}dy\nonumber\\
&=&\sum_{n=0}^{\infty}\frac{(\lambda T)^ne^{-\lambda T}}{n!}e^{\hat{\mu}+\frac{1}{2}\hat{\sigma}^2}\int_{-d_2-\hat{\sigma}}^{\infty}\phi(y)dy\nonumber\\
&=&\sum_{n=0}^{\infty}\frac{(\lambda T)^ne^{-\lambda T}}{n!}e^{\hat{\mu}+\frac{1}{2}\hat{\sigma}^2}\Phi(d_1=d_2+\hat{\sigma}).
\label{I1}
\end{eqnarray}

From Eqs. (\ref{I1}), (\ref{I2}) and (\ref{I}), we get desired result.
\end{proof}

\begin{figure}[!htp]
\begin{center}
\resizebox*{13.5cm}{!}{\includegraphics{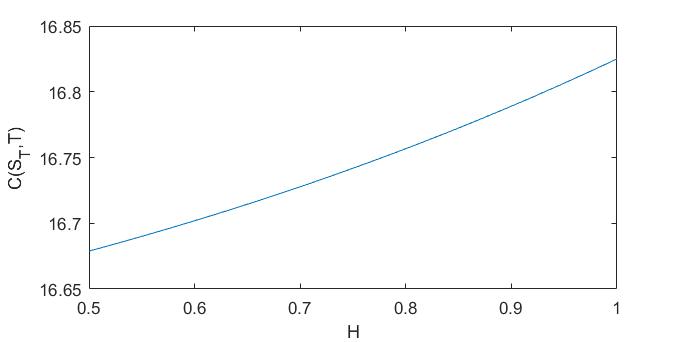}}\hspace{5pt}
\caption{The variation of the geometric Asian option with respect to the Husrt parameter $H$.  }\label{fig1}
\end{center}
\end{figure}
\begin{figure}[H]
\begin{center}
\resizebox*{13.5cm}{!}{\includegraphics{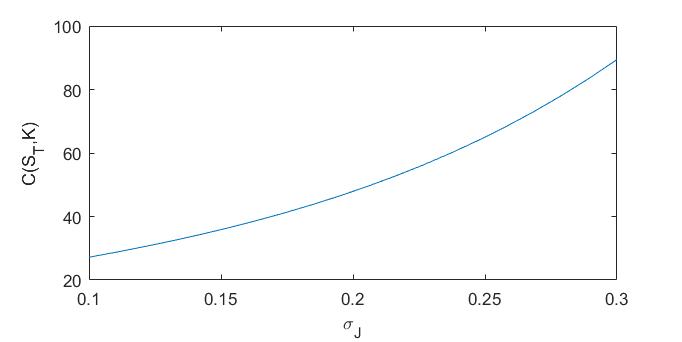}}\hspace{5pt}
\caption{The variation of the geometric Asian option with respect to the  $\sigma_J$.  }\label{fig2}
\end{center}
\end{figure}
\begin{figure}[H]
\begin{center}
\resizebox*{13.5cm}{!}{\includegraphics{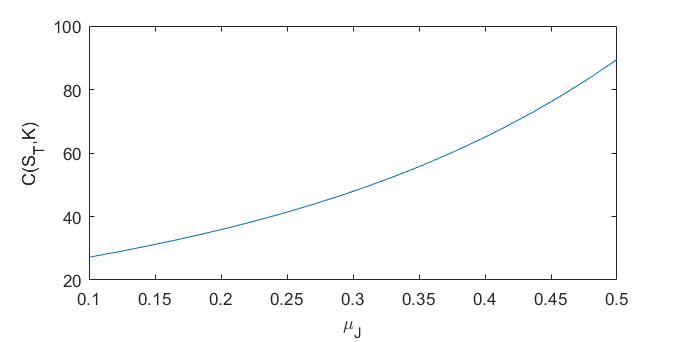}}\hspace{5pt}
\caption{The variation of the geometric Asian option with respect to the  $\mu_J$.  }\label{fig3}
\end{center}
\end{figure}

In Figs \ref{fig1}-\ref{fig3} we show the price of the geometric Asian option computed using formula (\ref{geometricjump}) as a function of the Hurst and the jump parameters.
\begin{figure}[H]
\begin{center}
\resizebox*{13.5cm}{!}{\includegraphics{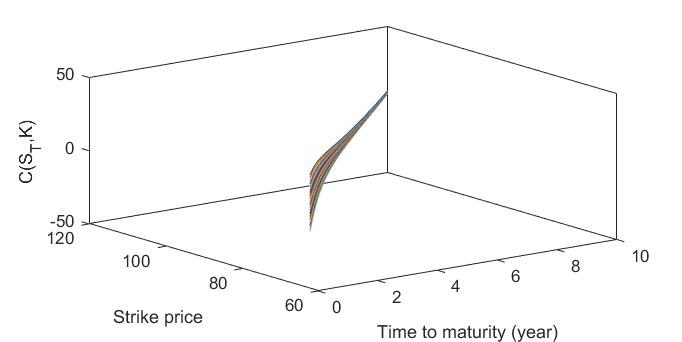}}\hspace{5pt}
\caption{The variation of the geometric Asian option with respect to the  strike price $K$ and time to maturity (year) $T$. }\label{fig4}
\end{center}
\end{figure}

In Fig \ref{fig4} we show the price of the geometric Asian option computed using formula (\ref{geometricjump}) as a function of the strike price and of the maturity.
\section{Arithmetic Asian option}

\begin{lmm} (\cite{vorst1992prices})

An approximate value of the arithmetic Asian option is given by

\begin{eqnarray}
\tilde{C}(S_T,K)&=&e^{-rT}E\left[(G(T)-K')^+\right],
\end{eqnarray}
where $G(T)$ ia the geometric Asian option and $K'$ is the adjusted strike price as follows

\begin{eqnarray}
K'=K-\left(E[A]-E[G]\right).
\end{eqnarray}
\label{Vorst}
\end{lmm}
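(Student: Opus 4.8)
The plan is to bypass the intractable law of the continuous arithmetic average $A(T)=\frac1T\int_0^T S_t\,\d t$ by replacing it with a deterministic translate of the geometric average $G(T)$, the translation being fixed so that the surrogate has the correct first moment, and then to reduce the resulting expectation to the closed-form geometric Asian price of Theorem~\ref{TheoremGeometric}. First I would write the discounted value of the arithmetic Asian call as $e^{-rT}E[(A(T)-K)^+]$ and record the pathwise arithmetic--geometric mean inequality $G(T)=\exp\!\big(\tfrac1T\int_0^T\ln S_t\,\d t\big)\le\tfrac1T\int_0^T S_t\,\d t=A(T)$; together with the fact that $A(T)$ and $G(T)$ are strongly positively correlated and of comparable shape, this is what motivates the approximation $A(T)\approx G(T)+c$ with the constant $c$ determined by moment matching.

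Next I would compute the two means in closed form. By Fubini and Lemma~\ref{lemma1}, $E[A(T)]=\frac1T\int_0^T S_0\, e^{(r-q)t+\lambda(\rho-1)t}\,\d t$, an elementary integral in the model parameters; and, conditioning on $\{N_T=n\}$ and using that $\ln G(T)$ is Gaussian with mean $\hat\mu$ and variance $\hat\sigma^2$ exactly as in the proof of Theorem~\ref{TheoremGeometric}, $E[G(T)]=\sum_{n\ge0}\frac{(\lambda T)^n e^{-\lambda T}}{n!}\,e^{\hat\mu+\frac12\hat\sigma^2}$. Setting $c:=E[A(T)]-E[G(T)]$ and $\hat A(T):=G(T)+c$ we get $E[\hat A(T)]=E[A(T)]$, and since an additive constant in the payoff can be absorbed into the strike,
\begin{eqnarray}
(A(T)-K)^+\approx(\hat A(T)-K)^+=\big(G(T)-(K-c)\big)^+=(G(T)-K')^+,\qquad K':=K-\big(E[A]-E[G]\big).
\end{eqnarray}
Discounting and taking expectations then gives $\tilde C(S_T,K)=e^{-rT}E[(G(T)-K')^+]$, the asserted formula; note that the right-hand side is a geometric Asian call with the adjusted strike $K'$ and is therefore available in closed form by the same argument used for Theorem~\ref{TheoremGeometric}.

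The honest point, and the one to stress, is that this is an \emph{approximation} rather than an identity: the rigorous ingredients are only the AM--GM ordering $G(T)\le A(T)$, the two moment computations, and the exact evaluation of $E[(G(T)-K')^+]$. What is \emph{not} controlled here is the replacement error $E[(A(T)-K)^+]-E[(\hat A(T)-K)^+]$, whose smallness rests on the empirical closeness of the distributions of $A(T)$ and $\hat A(T)$ — identical means, the pathwise inequality $G(T)\le A(T)$, and high correlation — in the spirit of Vorst~\cite{vorst1992prices}. Supplying that heuristic justification, rather than any sharp error bound, is the main obstacle, and it is the standard one for results of this type.
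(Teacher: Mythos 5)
Your derivation is correct and is precisely the standard moment-matching argument behind this lemma, which the paper itself does not prove but simply quotes from Vorst \cite{vorst1992prices}: replace $A(T)$ by the surrogate $G(T)+\big(E[A]-E[G]\big)$, which has the right mean and the right pathwise ordering $G\le A$, and absorb the additive constant into the strike to land on a geometric Asian price that Theorem~\ref{TheoremGeometric} evaluates in closed form. The replacement error you rightly flag as the uncontrolled step is exactly what the paper handles afterwards in Theorem~\ref{Vorst1}, where both $\tilde{C}$ and $C_A$ are sandwiched between $C_G$ and $C_G+e^{-rT}\big(E[A]-E[G]\big)$, giving the bound $|\tilde{C}-C_A|\le e^{-rT}\big(E[A]-E[G]\big)$.
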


Now, based on the actuarial approach an approximate of the arithmetic Asian options are proposed \cite{turnbull1991quick,kemna1990pricing}.

\begin{thm}

An approximate value of the arithmetic Asian option is given by

\begin{eqnarray}
\tilde{C}(S_T,K)&=&\nonumber\\
&=&\sum_{n=0}^{\infty}\frac{(\lambda T)^ne^{-\lambda T}}{n!}\left[e^{-(r-q)T-\lambda(\rho-1)T+\hat{\mu}+\frac{1}{2}\hat{\sigma}^2}\Phi(d_1)-K'e^{-rT}\Phi(d_2)\right],
\end{eqnarray}
for the call options,
and
\begin{eqnarray}
\tilde{P}(S_T,K)&=&\nonumber\\
&=&\sum_{n=0}^{\infty}\frac{(\lambda T)^ne^{-\lambda T}}{n!}\left[K'e^{-rT}\Phi(-d_2)-\e^{-(r-q)T-\lambda(\rho-1)T+\hat{\mu}+\frac{1}{2}\hat{\sigma}^2}\Phi(-d_1)\right],
\end{eqnarray}
for the put options, where
\begin{eqnarray}
&&d_2=\frac{\hat{\mu}-\ln U}{\hat{\sigma}}, \, d_1=d_2+\hat{\sigma},\nonumber\\
&&\rho=E(e^J)=e^{\mu_J+\frac{1}{2}\sigma_J^2},\, U=K'e^{-qT-(1-\rho)\lambda T},\nonumber\\
&&\hat{\mu}=\ln S_0+\frac{1}{2}(r_n-\frac{1}{2}\sigma_n^2)T,\,\bar{\mu}=\frac{S_0}{r_nT}\left(e^{r_nT}-1\right)\nonumber\\
&& \hat{\sigma}^2=\sigma_n^2\frac{T}{3},\, \sigma_n^2=\sigma^2+\frac{1}{T}(\varepsilon^2 T^{2H}+n\sigma_J^2)\nonumber\\
&&K'=K+E[G]-\bar{\mu},\,r_n=r-q+\frac{n}{T}(\mu_J+\frac{1}{2}\sigma_J^2),~E[G]=e^{\hat{\mu}+\frac{1}{2}\hat{\sigma}^2}.
\end{eqnarray}
and $\Phi(.)$ is cumulative normal density function.
\label{TheoremArithmetic}
\end{thm}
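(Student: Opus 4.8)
The strategy is to reduce Theorem~\ref{TheoremArithmetic} to the geometric case already settled in Theorem~\ref{TheoremGeometric}, using Vorst's approximation (Lemma~\ref{Vorst}) as the bridge. First I would recall that, within the actuarial pricing rule of Definition~\ref{Actuarial price}, the discount applied to the averaged asset is $\exp\!\big(-\int_0^T\theta(t)\,\d t\big)=e^{-(r-q)T-\lambda(\rho-1)T}$, which follows from Definition~\ref{actuarial} together with Lemma~\ref{lemma1}. Vorst's device is to write $A(T)=G(T)+\big(A(T)-G(T)\big)$ and to freeze the spread $A(T)-G(T)$ at its mean $E[A]-E[G]$; then $\big(A(T)-K\big)^+$ is replaced by $\big(G(T)-K'\big)^+$ with $K'=K-\big(E[A]-E[G]\big)$, and likewise for the put payoff. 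Carrying this substitution through $C(S_T,K)$ and $P(S_T,K)$ in Definition~\ref{Actuarial price} shows that $\tilde C(S_T,K)$ and $\tilde P(S_T,K)$ are \emph{exactly} the actuarial prices of a geometric Asian call and put with the same model parameters but with the exercise price $K$ replaced by $K'$. Consequently the entire derivation in the proof of Theorem~\ref{TheoremGeometric}, including the conditioning on $\{N_T=n\}$ and the Poisson mixing over $n$, applies verbatim, producing the stated formulas with $U=K'e^{-qT-(1-\rho)\lambda T}$, $d_2=(\hat\mu-\ln U)/\hat\sigma$ and $d_1=d_2+\hat\sigma$.

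The only new quantity to compute is $E[A]$. Conditioning on $\{N_T=n\}$, the representation in~(\ref{eq:Sn}) gives $S_t^{n}=S_0\exp\!\big[(r_n-\tfrac12\sigma_n^2)t+\sigma_n\sqrt{t}\,Z_n(t)\big]$, whose Gaussian exponent has mean $-\tfrac12\sigma_n^2 t$ and variance $\sigma_n^2 t$, so $E[S_t^{n}]=S_0e^{r_n t}$. Integrating over $[0,T]$,
\[
E[A\mid N_T=n]=\frac1T\int_0^T S_0 e^{r_n t}\,\d t=\frac{S_0}{r_nT}\big(e^{r_nT}-1\big)=\bar\mu .
\]
On the other hand $E[G\mid N_T=n]=e^{\hat\mu+\frac12\hat\sigma^2}$ is the lognormal mean already identified in the proof of Theorem~\ref{TheoremGeometric}, with $\hat\mu=\ln S_0+\tfrac12(r_n-\tfrac12\sigma_n^2)T$ and $\hat\sigma^2=\sigma_n^2 T/3$. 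Substituting these two expressions into $K'=K-\big(E[A]-E[G]\big)$ yields $K'=K+E[G]-\bar\mu$, exactly as in the statement; the remaining parameter identities ($r_n$, $\sigma_n^2$, $\rho$) are inherited unchanged from Theorem~\ref{TheoremGeometric}.

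Finally I would assemble the pieces: plugging $K'$ in place of $K$ into formula~(\ref{geometricjump}) gives the claimed series for $\tilde C(S_T,K)$, and the analogous substitution in the companion geometric put formula gives $\tilde P(S_T,K)$ (equivalently, one may invoke the put--call parity already encoded in the two formulas of Theorem~\ref{TheoremGeometric}). I expect the main difficulty to be bookkeeping rather than analysis: one must keep track that $\bar\mu$, $E[G]$, and hence $K'$ are all taken \emph{conditionally on} $\{N_T=n\}$, so that the adjusted strike genuinely depends on the summation index inside the Poisson series; in particular the naive \emph{unconditional} $E[A(T)]$ would involve $r-q+\lambda(\rho-1)$ rather than $r_n$, and it is the jump-count conditioning that produces the $r_n$ appearing in $\bar\mu$. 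Once this is made explicit, every step is an immediate consequence of Lemma~\ref{Vorst} and Theorem~\ref{TheoremGeometric}.
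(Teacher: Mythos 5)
Your proposal is correct and follows essentially the same route as the paper: compute $\bar\mu=E[A]=\frac{S_0}{r_nT}(e^{r_nT}-1)$, form the Vorst-adjusted strike $K'=K+E[G]-\bar\mu$ with $E[G]=e^{\hat\mu+\frac12\hat\sigma^2}$, and then substitute $K'$ for $K$ throughout the proof of Theorem~\ref{TheoremGeometric}. If anything, you are more careful than the paper on the one delicate point --- that $\bar\mu$ and $K'$ must be computed conditionally on $\{N_T=n\}$ so that $r_n$ (rather than the unconditional drift $r-q+\lambda(\rho-1)$) appears, a distinction the paper's own proof glosses over when it cites Lemma~\ref{lemma1}.
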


\begin{proof}
Suppose
\begin{eqnarray}
A(T)=\frac{1}{T}\int_0^TS_tdt
\end{eqnarray}
be an arithmetic average with mean $\bar{\mu}=E\left[A(T)\right]$, then from Lemma \ref{lemma1} we can get

\begin{eqnarray}
\bar{\mu}&=&E\left[A(T)\right]=\frac{1}{T}\int_0^TE\left[S_T\right]dt\nonumber\\
&=&\frac{S_0}{T}\int_0^Te^{{r_nt}}dt\nonumber\\
&=&\frac{S_0}{r_nT}\left(e^{r_nT}-1\right).
\end{eqnarray}

From Theorem \ref{Vorst}, to provide an approximation of the arithmetic Asian options, we need to calculate the new exercise price $K'$,

\begin{eqnarray}
K'&=&K-\left(E[A]-E[G]\right)\nonumber\\
&=&K+E[G]-\bar{\mu},
\end{eqnarray}
where $E[G]=e^{\hat{\mu}+\frac{1}{2}\hat{\sigma}^2}$ is calculated in Theorem \ref{TheoremGeometric}. Now, by replacing $K$ with $K'$ in the proof of Theorem \ref{TheoremGeometric}, the proof is completed.
\end{proof}

Suppose, $G$ and $A$ be the geometric and arithmetic average, respectively. Since the geometric average is less than the arithmetic average, $G<A$, the geometric option price is a lower bound for the arithmetic option price

\begin{eqnarray}
C_G=e^{-rT}E\left[(G-K)^+\right]\leq e^{-rT}E\left[(A-K)^+\right]=C_A.
\end{eqnarray}
The inequality

\begin{eqnarray}
\max (A-K, 0)&=&\max(G-K, G-A)+A-G\nonumber\\
&\leq& \max(G-K, 0)+A-G,
\end{eqnarray}

leads to the upper bound

\begin{eqnarray}
&&e^{-rT}E\left[\max(G-K, 0)+A-G\right]\nonumber\\
&&=e^{-rT}\left\{E\left[\max(G-K)^+\right]+E[A]-E[G]\right\}\nonumber\\
&&=C_G+e^{-rT}\left(E[A]-E[G]\right).
\end{eqnarray}
Therefore, we have the following bounds on the arithmetic option price

\begin{eqnarray}
C_G\leq C_A\leq C_G+e^{-rT}\left(E[A]-E[G]\right).
\end{eqnarray}

To compute the upper bound we need to compute the mean of arithmetic average, which is obtained in Theorem \ref{TheoremArithmetic}.

\begin{thm}

The pricing error of arithmetic approximate vale $\tilde{C}$ is given by

\begin{eqnarray}
|\tilde{C}-C_A|&=&e^{-rT}\left(E[A]-E[G]\right).
\end{eqnarray}

\label{Vorst1}
\end{thm}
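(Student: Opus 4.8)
The plan is a squeeze (bracketing) argument: I would show that the approximate price $\tilde C$ and the true arithmetic price $C_A$ both lie inside one and the same interval whose length is exactly $e^{-rT}\left(E[A]-E[G]\right)$, so that their distance cannot exceed this length.

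First I would invoke Lemma \ref{Vorst} to record the exact form of the quantity constructed in Theorem \ref{TheoremArithmetic}, namely
\begin{eqnarray}
\tilde C = e^{-rT}E\left[\left(G(T)-K'\right)^{+}\right], \qquad K' = K-\left(E[A]-E[G]\right),
\end{eqnarray}
and I would use the elementary fact (already recalled just before the statement) that the geometric average never exceeds the arithmetic average, so that $E[G]\le E[A]$ and hence $K'\le K$; set $\delta:=E[A]-E[G]\ge 0$.

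Next I would bracket $\tilde C$. Since $K'\le K$ and $x\mapsto x^{+}$ is nondecreasing, $\left(G(T)-K'\right)^{+}\ge\left(G(T)-K\right)^{+}$, so $\tilde C\ge C_G$ with $C_G=e^{-rT}E\left[\left(G(T)-K\right)^{+}\right]$. For the opposite bound I would apply the sublinearity estimate $(x+\delta)^{+}\le x^{+}+\delta$ with $x=G(T)-K$, which gives $\left(G(T)-K'\right)^{+}\le\left(G(T)-K\right)^{+}+\delta$ and hence $\tilde C\le C_G+e^{-rT}\delta$. Combining this with the chain $C_G\le C_A\le C_G+e^{-rT}\delta$ derived immediately above the statement, both $\tilde C$ and $C_A$ belong to $\left[C_G,\;C_G+e^{-rT}\left(E[A]-E[G]\right)\right]$, whence
\begin{eqnarray}
|\tilde C - C_A|\le e^{-rT}\left(E[A]-E[G]\right),
\end{eqnarray}
and the right-hand side is explicit because $E[A]=\bar\mu$ and $E[G]=e^{\hat\mu+\frac12\hat\sigma^{2}}$ were computed in Theorem \ref{TheoremArithmetic}.

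The point that needs care — and which I expect to be the real obstacle — is the sense in which this relation is an \emph{equality}. The squeeze argument only delivers the inequality above, which should be read as the maximal (worst-case) pricing error, i.e. the width of the bracketing interval $\left[C_G,\;C_G+e^{-rT}(E[A]-E[G])\right]$ containing both $\tilde C$ and $C_A$; I would make this interpretation explicit so the displayed ``$=$'' is not mistaken for a pathwise or exact identity, since for a genuinely lognormal $G(T)$ one has $P\left(G(T)\le K\right)>0$ and the upper estimate $\tilde C\le C_G+e^{-rT}\delta$ is in fact strict. Everything else — the two monotonicity/sublinearity estimates for $x\mapsto x^{+}$ and the substitution of the already-computed $E[A]$ and $E[G]$ — is routine.
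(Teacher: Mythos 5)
Your proposal follows essentially the same route as the paper: the paper also brackets both $\tilde C$ and $C_A$ in the interval $\left[C_G,\; C_G+e^{-rT}\left(E[A]-E[G]\right)\right]$ via the monotonicity/sublinearity estimates for $x\mapsto\max(x,0)$ and concludes $|\tilde C - C_A|\leq e^{-rT}\left(E[A]-E[G]\right)$. Your caveat about the displayed ``$=$'' is well taken — the paper's own proof likewise establishes only the inequality, so the theorem is really a worst-case error bound rather than an exact identity.
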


\begin{proof}
Sine
\begin{eqnarray}
\max(G-K,0)&\leq&\max\left(G-K+E[A]-E[G], 0\right)\nonumber\\
&=&\max\left(G-K+E[G]-E[A]\right)+E[A]-E[G]\nonumber\\
&\leq&\max\left(G-K, 0\right)+E[A]-E[G],
\end{eqnarray}
it follows that
\begin{eqnarray}
C_G&=&e^{-rT}E\left[(G-K)^+\right]\nonumber\\
&\leq&e^{-rT}E\left[(G-K+E[A]-E[G])^+\right]\nonumber\\
&=&\tilde{C}\nonumber\\
&\leq&e^{-rT}E\left[(G-K)^++E[A]-E[G]\right]\nonumber\\
&=&C_G+e^{-rT}\left(E[A]-E[G]\right).
\end{eqnarray}
The analogous bounds exert for the exact value $C_A$, then the absolute difference between $\tilde{C}$ and $C_A$ is less than the difference between the upper and lower bounds
\begin{eqnarray}
|\tilde{C}-C_A|&\leq&C_G+e^{-rT}\left(E[A]-E[G]\right)-C_G\nonumber\\
&=&e^{-rT}\left(E[A]-E[G]\right).
\end{eqnarray}
\end{proof}

\section{Asian power options}

This section is deals with evaluate Asian power options by using actuarial approach in a mixed fractional Brownian motion with jumps environment.
From [some references], we know that the payoff function of Asian options is given by $(G^m(T)-K)^+$ and $(A^m(T)-K)^+$ for the geometric and arithmetic call options with some integer $m\geq1$, respectively.

\begin{thm}

Let the stock price $S(T)$ satisfy Equation (\ref{stock price 2}). Then, the value of geometric Asian power options with exercise price $K$ is given by
\begin{eqnarray}
C(S_T,K)&=&\nonumber\\
&=&\sum_{n=0}^{\infty}\frac{(\lambda T)^ne^{-\lambda T}}{n!}\left[e^{-(r-q)T-\lambda(\rho-1)T+m\hat{\mu}+\frac{1}{2}m^2\hat{\sigma}^2}\Phi(d_1)-Ke^{-rT}\Phi(d_2)\right],
\label{powerjump}
\end{eqnarray}
for the call options,
and
\begin{eqnarray}
P(S_T,K)&=&\nonumber\\
&=&\sum_{n=0}^{\infty}\frac{(\lambda T)^ne^{-\lambda T}}{n!}\left[Ke^{-rT}\Phi(-d_2)-\e^{-(r-q)T-\lambda(\rho-1)T+m\hat{\mu}+\frac{1}{2}m^2\hat{\sigma}^2}\Phi(-d_1)\right],
\end{eqnarray}
for the put options, where $d_2=\frac{\hat{\mu}-\ln U^{\frac{1}{n}}}{\hat{\sigma}}$ and other parameters are defined in Theorem \ref{TheoremGeometric}.
\label{TheoremPowerGeometric}
\end{thm}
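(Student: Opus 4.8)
The plan is to repeat, almost verbatim, the argument of Theorem \ref{TheoremGeometric}, the single modification being that the payoff is now driven by $G^m(T)$ rather than $G(T)$. Concretely: recall from the proof of Theorem \ref{TheoremGeometric} that, conditionally on $N_T=n$, the averaged log-price $L(T)=\frac1T\int_0^T\ln S_t\,\d t$ is Gaussian with mean $\hat\mu=\ln S_0+\frac12(r_n-\frac12\sigma_n^2)T$ and variance $\hat\sigma^2=\sigma_n^2T/3$, and that $G(T)=e^{L(T)}$. Hence $G^m(T)=e^{mL(T)}$, so that $\ln G^m(T)=mL(T)$ is again Gaussian, now with mean $m\hat\mu$ and variance $m^2\hat\sigma^2$; this is the only structural input needed, and the effect of the power $m$ propagates mechanically from here.

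Next I would write the actuarial price of the geometric Asian power call (Definition \ref{Actuarial price} with $A(T)$ replaced by $G^m(T)$) as
\begin{eqnarray*}
C(S_T,K)&=&E\Big[\Big(\exp\Big(-\int_0^T\theta(t)\,\d t\Big)G^m(T)-Ke^{-rT}\Big)I_B\Big],
\end{eqnarray*}
with $B=\{\exp(-\int_0^T\theta(t)\,\d t)G^m(T)>Ke^{-rT}\}$, and use Definition \ref{actuarial} together with Lemma \ref{lemma1} to replace the discount factor by $\exp(-\int_0^T\theta(t)\,\d t)=e^{-(r-q)T-\lambda(\rho-1)T}$. Pulling this deterministic factor out reduces the claim to the evaluation of $C=e^{-(r-q)T-\lambda(\rho-1)T}I_1-Ke^{-rT}I_2$ with $I_1=E[G^m(T)I_B]$ and $I_2=E[I_B]$, exactly as in (\ref{I}).

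For $I_2$ I would condition on $N_T=n$, with Poisson weights $(\lambda T)^n e^{-\lambda T}/n!$, and rewrite the exercise event: writing $L(T)=\hat\mu+\hat\sigma y$ with $y\sim N(0,1)$, the condition $\exp(-\int\theta)G^m(T)>Ke^{-rT}$ reduces to $G^m(T)>U$ with $U=Ke^{-qT-(1-\rho)\lambda T}$, hence to $m(\hat\mu+\hat\sigma y)>\ln U$, i.e. $y>-d_2$ with $d_2=(\hat\mu-\tfrac1m\ln U)/\hat\sigma=(\hat\mu-\ln U^{1/m})/\hat\sigma$; therefore $I_2=\sum_{n=0}^{\infty}\frac{(\lambda T)^n e^{-\lambda T}}{n!}\Phi(d_2)$. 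For $I_1=E[e^{mL(T)}I_B]$ I would, as in the derivation of (\ref{I1}), substitute $e^{mL(T)}=e^{m\hat\mu+m\hat\sigma y}$, complete the square $m\hat\mu+m\hat\sigma y-\tfrac12 y^2=m\hat\mu+\tfrac12 m^2\hat\sigma^2-\tfrac12(y-m\hat\sigma)^2$, and translate the variable of integration, obtaining $I_1=\sum_{n=0}^{\infty}\frac{(\lambda T)^n e^{-\lambda T}}{n!}e^{m\hat\mu+\frac12 m^2\hat\sigma^2}\Phi(d_1)$ with $d_1=d_2+m\hat\sigma$. Substituting $I_1$ and $I_2$ back yields the claimed series (\ref{powerjump}); the put formula is obtained in the same way from (\ref{eq:putprice}) (or from the corresponding put--call parity).

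The argument presents no genuine obstacle — it is the proof of Theorem \ref{TheoremGeometric} with $\hat\mu\mapsto m\hat\mu$ and $\hat\sigma\mapsto m\hat\sigma$ inside the payoff. The only places demanding care are bookkeeping ones: correctly rescaling the strike threshold to $U^{1/m}$ when passing from $\{G^m(T)>U\}$ to a half-line in $y$ (so that $d_2=(\hat\mu-\ln U^{1/m})/\hat\sigma$, i.e. with exponent $1/m$), and checking that the Gaussian integral defining $I_1$ produces the term $\tfrac12 m^2\hat\sigma^2$ in the exponent together with the shift $d_1=d_2+m\hat\sigma$. Once these are in place, collecting the Poisson series gives the statement.
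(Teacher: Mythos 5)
Your proposal follows the paper's own proof essentially step for step: same actuarial decomposition $C=e^{-(r-q)T-\lambda(\rho-1)T}I_1-Ke^{-rT}I_2$, same conditioning on $N_T=n$ with Poisson weights, same rewriting of the exercise set as $\{y>-d_2\}$ with $d_2=(\hat{\mu}-\ln U^{1/m})/\hat{\sigma}$ (you also silently correct the statement's typo $U^{1/n}$ to $U^{1/m}$), and the same completion of the square for $I_1$. The one substantive point where you depart from the paper is the shift in $d_1$: after completing the square one has
\begin{equation*}
I_1=\sum_{n=0}^{\infty}\frac{(\lambda T)^ne^{-\lambda T}}{n!}\,e^{m\hat{\mu}+\frac{1}{2}m^2\hat{\sigma}^2}\int_{-d_2}^{\infty}\frac{1}{\sqrt{2\pi}}e^{-\frac{1}{2}(y-m\hat{\sigma})^2}\,\d y
=\sum_{n=0}^{\infty}\frac{(\lambda T)^ne^{-\lambda T}}{n!}\,e^{m\hat{\mu}+\frac{1}{2}m^2\hat{\sigma}^2}\,\Phi(d_2+m\hat{\sigma}),
\end{equation*}
so the correct upper argument is $d_1=d_2+m\hat{\sigma}$, exactly as you write. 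The paper's proof translates the integration variable by $\hat{\sigma}$ rather than $m\hat{\sigma}$ at this step and accordingly states $d_1=d_2+\hat{\sigma}$ (inherited from Theorem \ref{TheoremGeometric}); for $m\neq 1$ that is a slip, and your bookkeeping is the right one. Everything else in your argument matches the paper's route, so no gap on your side — just be aware that your (correct) $d_1$ differs from the one printed in the theorem.
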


\begin{proof}
For the geometric Asian power options, the payoff function is $(G^m(T)-K)^+=(e^{mL(T)}-K)^+$. Using the strategy given in proof of Theorem \ref{TheoremGeometric}, then we have

\begin{eqnarray}
C(S_T,K)&=&E\left[\left(\exp\left(-\int_0^T\theta(t)dt\right)G^m(T)-Ke^{-rT}\right)I_B\right]\nonumber\\
&=&E\left[\left(e^{-(r-q)T-\lambda(\rho-1)T}G^m(T)-Ke^{-rT}\right)I_B\right]\nonumber\\
&=&e^{-(r-q)T-\lambda(\rho-1)T}E\left[G^m(T)I_B\right]-Ke^{-rT}E\left[I_B\right]\nonumber\\
&=&e^{-(r-q)T-\lambda(\rho-1)T}I_1-Ke^{-rT}I_2,
\label{I111}
\end{eqnarray}
where $I_1=E\left[A(T)I_B\right]$ and $I_2=E\left[I_B\right]$.
Let

\begin{eqnarray}
B&=&\left\{x:e^{mx}>U=Ke^{-qT-(1-\rho)\lambda T}\right\}\nonumber\\
&=&\left\{x:G^m(T)>U\right\}\nonumber\\
&=&\left\{y:e^{m(\hat{\mu}+\hat{\sigma}y)}>U\right\}=\left\{y:m(\hat{\mu}+\hat{\sigma}y)>\ln U\right\}\nonumber\\
&=&\left\{y:y>-d_2=\frac{\ln U^{\frac{1}{m}}-\hat{\mu}}{\hat{\sigma}}\right\}.
\end{eqnarray}
Observe that

\begin{eqnarray}
I_2&=&E\left[I_B\right]=\sum_{n=0}^{\infty}E\left[I_{G^m(T)>-d_2}\right]P\left[N_T=n\right]\nonumber\\
&=&\sum_{n=0}^{\infty}\frac{(\lambda T)^ne^{-\lambda T}}{n!}E\left[I_{G^m(T)>-d_2}\right]\nonumber\\
&=&\sum_{n=0}^{\infty}\frac{(\lambda T)^ne^{-\lambda T}}{n!}\int_{-d_2}^{\infty}\phi(y)dy\nonumber\\
&=&\sum_{n=0}^{\infty}\frac{(\lambda T)^ne^{-\lambda T}}{n!}\Phi(d_2),
\label{I112}
\end{eqnarray}

and

\begin{eqnarray}
I_1&=&E\left[G^m(T)I_B\right]=\sum_{n=0}^{\infty}E\left[G^m(T)I_{G^m(T)>-d_2}\right]P\left[N_T=n\right]\nonumber\\
&=&\sum_{n=0}^{\infty}\frac{(\lambda T)^ne^{-\lambda T}}{n!}E\left[e^{mL(T)}I_{G^m(T)>-d_2}\right]\nonumber\\
&=&\sum_{n=0}^{\infty}\frac{(\lambda T)^ne^{-\lambda T}}{n!}\int_{B}e^{mx}\frac{1}{\sqrt{2\pi}\hat{\sigma}}e^{-\frac{(x-\hat{\mu})^2}{2\hat{\sigma}^2}}dx\nonumber\\
&=&\sum_{n=0}^{\infty}\frac{(\lambda T)^ne^{-\lambda T}}{n!}\int_{B}e^{m(\hat{\mu}+\hat{\sigma}y)}\frac{1}{\sqrt{2\pi}\hat{\sigma}}e^{-\frac{(x-\hat{\mu})^2}{2\hat{\sigma}^2}}dy\nonumber\\
&=&\sum_{n=0}^{\infty}\frac{(\lambda T)^ne^{-\lambda T}}{n!}e^{m\hat{\mu}+\frac{1}{2}m^2\hat{\sigma}^2}\int_{-d_2}^{\infty}\frac{1}{\sqrt{2\pi}}e^{-\frac{1}{2}(y-m\hat{\sigma})^2}dy\nonumber\\
&=&\sum_{n=0}^{\infty}\frac{(\lambda T)^ne^{-\lambda T}}{n!}e^{m\hat{\mu}+\frac{1}{2}m^2\hat{\sigma}^2}\int_{-d_2-\hat{\sigma}}^{\infty}\phi(y)dy\nonumber\\
&=&\sum_{n=0}^{\infty}\frac{(\lambda T)^ne^{-\lambda T}}{n!}e^{m\hat{\mu}+\frac{1}{2}m^2\hat{\sigma}^2}\Phi(d_1=d_2+\hat{\sigma}).
\label{I113}
\end{eqnarray}

From Eqs. (\ref{I111}), (\ref{I112}) and (\ref{I113}), we get desired result.

\end{proof}

In Figs \ref{fig5}-\ref{fig7} we show the price of the power Asian option computed using formula \ref{powerjump} as a function of the Hurst and the jump. parameters.
\begin{figure}[H]
\begin{center}
\resizebox*{13.5cm}{!}{\includegraphics{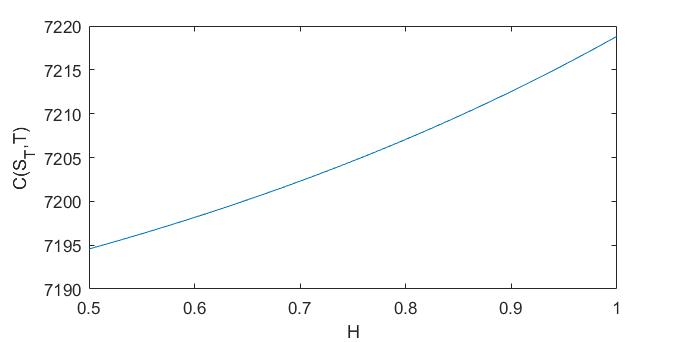}}\hspace{5pt}
\caption{The variation of the power Asian option with respect to the Husrt parameter $H$. }\label{fig5}
\end{center}
\end{figure}
\begin{figure}[H]
\begin{center}
\resizebox*{13.5cm}{!}{\includegraphics{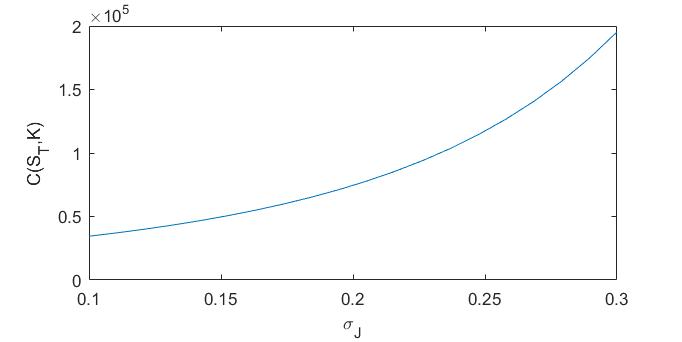}}\hspace{5pt}
\caption{The variation of the power Asian option with respect to the  $\sigma_J$}.\label{fig6}
\end{center}
\end{figure}
\begin{figure}[H]
\begin{center}
\resizebox*{13.5cm}{!}{\includegraphics{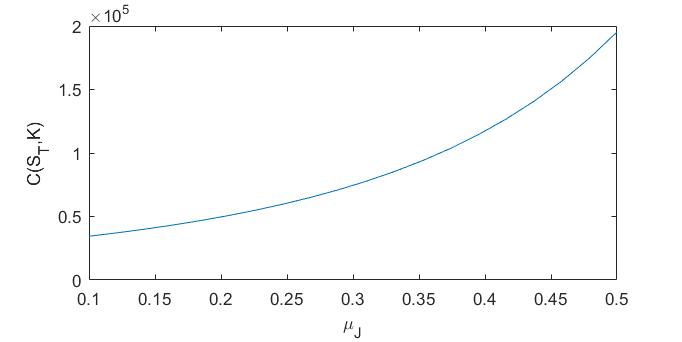}}\hspace{5pt}
\caption{The variation of the power Asian option with respect to the  $\mu_J$}.\label{fig7}
\end{center}
\end{figure}
\begin{figure}[H]
\begin{center}
\resizebox*{13.5cm}{!}{\includegraphics{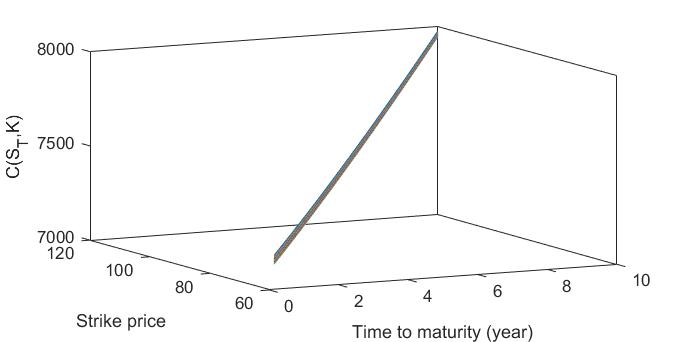}}\hspace{5pt}
\caption{The variation of the power Asian option with respect to the  both strike price $K$ and time to maturity (year) $T$.}\label{fig8}
\end{center}
\end{figure}

In Fig \ref{fig8} we show the price of the power Asian option computed using formula \ref{powerjump} as a function of the strike price and of the maturity.

The payoff function of the power Asian power call option is $(A^m(T)-K)$, then its approximate value is given by the following theorem.
\begin{thm}

An approximate value of the arithmetic Asian power option is given by

\begin{eqnarray}
\tilde{C}(S_T,K)&=&\nonumber\\
&=&\sum_{n=0}^{\infty}\frac{(\lambda T)^ne^{-\lambda T}}{n!}\left[e^{-(r-q)T-\lambda(\rho-1)T+\hat{\mu}+\frac{1}{2}\hat{\sigma}^2}\Phi(d_1)-K'e^{-rT}\Phi(d_2)\right],
\end{eqnarray}
for the call options,
and
\begin{eqnarray}
\tilde{P}(S_T,K)&=&\nonumber\\
&=&\sum_{n=0}^{\infty}\frac{(\lambda T)^ne^{-\lambda T}}{n!}\left[K'e^{-rT}\Phi(-d_2)-\e^{-(r-q)T-\lambda(\rho-1)T+\hat{\mu}+\frac{1}{2}\hat{\sigma}^2}\Phi(-d_1)\right],
\end{eqnarray}
for the put options, where $d_2=\frac{\hat{\mu}-\ln U^{\frac{1}{m}}}{\hat{\sigma}}$ and other parameters are defined in Theorem \ref{TheoremArithmetic}.
\label{TheoremPowerArithmetic}
\end{thm}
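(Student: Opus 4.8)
The plan is to reproduce, in the power setting, the two-step argument behind Theorem~\ref{TheoremArithmetic}: first reduce the arithmetic Asian power payoff $(A^m(T)-K)^+$ to a geometric Asian power payoff $(G^m(T)-K')^+$ with a shifted strike $K'$ by a Vorst-type approximation, and then read off the price from the closed form of Theorem~\ref{TheoremPowerGeometric} with $K$ replaced by $K'$.

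For the first step, since $G(T)\le A(T)$ and $x\mapsto x^m$ is increasing on $(0,\infty)$ we have $G^m(T)\le A^m(T)$, hence
\begin{eqnarray}
\left(G^m(T)-K\right)^{+}\le \left(A^m(T)-K\right)^{+}\le \left(G^m(T)-K\right)^{+}+\left(A^m(T)-G^m(T)\right),
\end{eqnarray}
which is exactly the chain of inequalities preceding Theorem~\ref{Vorst1}. Replacing the nonnegative correction $A^m(T)-G^m(T)$ by its mean and absorbing it into the strike gives the approximation $\tilde C(S_T,K)=e^{-rT}E[(G^m(T)-K')^+]$ with $K'=K-(E[A^m(T)]-E[G^m(T)])$, the power analogue of Lemma~\ref{Vorst}.

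For the second step I need the two expectations entering $K'$. Conditioning on $N_T=n$, the variable $\ln G(T)$ is Gaussian with mean $\hat\mu$ and variance $\hat\sigma^2$ as in Theorem~\ref{TheoremGeometric}, so $E[G^m(T)\mid N_T=n]=e^{m\hat\mu+\frac12 m^2\hat\sigma^2}$, the factor already appearing in Theorem~\ref{TheoremPowerGeometric}; summing against the Poisson weights yields $E[G^m(T)]$. For the arithmetic side I would write $A^m(T)=T^{-m}\int_0^T\!\cdots\!\int_0^T\prod_{j=1}^m S_{t_j}\,\d t_1\cdots\d t_m$ and evaluate $E[\prod_j S_{t_j}]$ by separating the jointly Gaussian Brownian / fractional-Brownian part, whose contribution depends only on the kernel $\sigma^2(t_i\wedge t_j)+\tfrac12\varepsilon^2(t_i^{2H}+t_j^{2H}-|t_i-t_j|^{2H})$, from the independent Poisson jump part; integrating over $[0,T]^m$ then reproduces $\bar\mu=\frac{S_0}{r_nT}(e^{r_nT}-1)$ when $m=1$ and a closed expression for $E[A^m(T)]$ in general. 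With $K'$ thus determined, substituting it for $K$ in Theorem~\ref{TheoremPowerGeometric} changes only the strike-dependent quantities $U$ and $d_2$, giving the call formula at once; the put formula follows either by the same computation on the complementary event or by put--call parity, and the error estimate is the power analogue of Theorem~\ref{Vorst1}, namely $|\tilde C-C_A|\le e^{-rT}(E[A^m(T)]-E[G^m(T)])$.

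The main obstacle is the moment computation $E[A^m(T)]$: for $m\ge 2$ the $m$-fold integral genuinely couples the non-trivial mfBm covariance across distinct times with the overlapping jump sums $\sum_{i\le N_{t_j}}J_i$, so it is considerably more delicate than the $m=1$ case treated in Theorem~\ref{TheoremArithmetic}, and one must verify that the resulting $K'$ stays consistent with the parametrization inherited from that theorem. Once that bookkeeping is in place, the remainder of the argument is a direct transcription of the proof of Theorem~\ref{TheoremPowerGeometric}.
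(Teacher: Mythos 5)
Your overall architecture is the same as the paper's: a Vorst-type strike adjustment $K'=K-\bigl(E[A^m]-E[G^m]\bigr)$ followed by substitution of $K'$ for $K$ in Theorem \ref{TheoremPowerGeometric}, with $E[G^m]=e^{m\hat\mu+\frac12 m^2\hat\sigma^2}$ read off from the lognormality of $G(T)$ conditional on $N_T=n$. Where you and the paper part ways is exactly the point you flag as your ``main obstacle,'' and as written your proposal does not get past it: you never actually produce a closed expression for $E[A^m(T)]$, so you never obtain the explicit $\bar\mu$ (and hence the explicit $K'$) that the theorem's statement requires. The $m$-fold integral $T^{-m}\int_{[0,T]^m}E\bigl[\prod_j S_{t_j}\bigr]\,\d t_1\cdots\d t_m$ you propose is the honest computation, but it does not collapse to anything like $\frac{S_0^m}{mr_nT}(e^{mr_nT}-1)$; the cross-covariances of the mfBm increments and the overlapping jump counts genuinely enter for $m\ge 2$, so a proof along your lines would have to either carry out that multiple integral or replace it by a further approximation, neither of which you do.

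The paper avoids the issue by computing
\begin{eqnarray}
\bar\mu=\frac{1}{T}\int_0^T E\left[(S_t^n)^m\right]\d t,
\end{eqnarray}
i.e.\ a one-dimensional integral of the $m$-th moment of $S_t$ --- which is the mean of $\frac1T\int_0^T S_t^m\,\d t$, not of $\bigl(\frac1T\int_0^T S_t\,\d t\bigr)^m$. In other words, the paper silently reinterprets $A^m(T)$ as the time-average of $S_t^m$; only under that reading does its closed form $\bar\mu=\frac{S_0^m}{mr_nT}\bigl(e^{mr_nT}-1\bigr)$ follow (and even then the factor $e^{\frac12 m(m-1)\sigma_n^2 t}$ is dropped along the way). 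So your instinct that the $m$-fold moment is the hard part is correct and is precisely what distinguishes your (incomplete) argument from the paper's (complete but payoff-inconsistent) one. One further point worth noting: substituting $K'$ into Theorem \ref{TheoremPowerGeometric}, as you propose, yields the exponent $m\hat\mu+\frac12 m^2\hat\sigma^2$ in the first term, whereas the statement of Theorem \ref{TheoremPowerArithmetic} displays $\hat\mu+\frac12\hat\sigma^2$; your route therefore reproduces the formula of Theorem \ref{TheoremPowerGeometric} with shifted strike rather than the formula as printed.
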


\begin{proof}
Suppose $\bar{\mu}=E\left[A^m(T)\right]$, then

\begin{eqnarray}
\bar{\mu}&=&E\left[A^m(T)\right]=\frac{1}{T}\int_0^TE\left[(S_t^n)^m\right]dt\nonumber\\
&=&\frac{S_0}{T}\int_0^Te^{m((r_n-\frac{1}{2}\sigma_n^2)t+\frac{1}{2}m\sigma_n^2t)}=\frac{S_0^m}{T}\int_0^Te^{mr_nt}dt\nonumber\\
&=&\frac{S_0^m}{mr_nT}\left(e^{mr_nT}-1\right).
\end{eqnarray}

From Theorem \cite{vorst1992prices}, to provide an approximation of the arithmetic Asian power options, we need to calculate the new exercise price $K'$,

\begin{eqnarray}
K'&=&K-\left(E[A^m]-E[G^m]\right)\nonumber\\
&=&K+E[G^m]-\bar{\mu},
\end{eqnarray}
where $E[G^m]=e^{m\hat{\mu}+\frac{1}{2}m^2\hat{\sigma}^2}$ is calculated in Theorem \ref{TheoremPowerGeometric}. Now, by replacing $K$ with $K'$ in the proof of Theorem \ref{TheoremPowerGeometric}, the proof is completed.
\end{proof}

\section{Conclusions}

We derived approximate closed-form solutions for pricing arithmetic Asian options on an underlying described by the mixed fractional Brownian motion ($mfBm$). In the last decade, the $mfBm$ has gained increasing popularity in finance, since it allows one to incorporate long-range dependence and self-similarity and, for certain values of the Hurst parameter, it is arbitrage-free.
In the present paper, suitable approximate closed-form solutions are obtained for both arithmetic Asian options and arithmetic Asian power options, by applying a convenient approximation of their  strike prices. Both the standard $mfBm$ and the $mfBm$ with Poisson log-normally distributed jumps are considered.

\bibliographystyle{plain}
\bibliography{reference}

\end{document}